\documentclass[pra,showpacs,aps,epsfigure,twocolumn]{revtex4}

\usepackage{array}
\usepackage{amsmath}
\usepackage{graphicx}
\usepackage{dcolumn}
\usepackage{bm}
\usepackage{graphicx}
\usepackage{amsmath}
\usepackage{latexsym}
\usepackage{amsfonts}
\usepackage{amssymb}
\usepackage{array}
\usepackage{epsfig}
\usepackage{epstopdf}
\usepackage{times}
\usepackage{amsthm}

\newtheoremstyle{named}{}{}{\itshape}{}{\bfseries}{.}{.5em}{\thmnote{#3's }#1}
\theoremstyle{named}
\newtheorem*{namedtheorem}{Proposition}

\begin{document}

\title{The work cost of keeping states with coherences out of thermal equilibrium}
\author{Giovanni Vacanti, Cyril Elouard, Alexia Auff\`eves}
\affiliation{Institut Neel, 25 avenue des Martyrs, 38042 Grenoble, France}

\date{\today}

\begin{abstract}
We consider the problem of keeping an arbitrary  state $\rho_s$ out of thermal equilibrium. We find  that counter-acting thermalisation using only a resource system which is in a stationary state at the initial time and a system-resource interaction that preserves the global energy is possible if and only if the target state $\rho_s$ is block-diagonal in the eigenbasis of the system's Hamiltonian $H_s.$ As a consequence, we compute the extra work the operator must provide by tuning the resource-system interaction to overcome this constraint. This quantity, which is interpreted as the work needed to preserve the coherences in the state, can be expressed in terms of the target state $\rho_s$ and the thermal equilibrium state $\rho_{\beta},$ and it is proportional to the symmetrized relative entropy between $\rho_s$ and $\rho_{\beta}.$ \end{abstract}

\pacs{05.70.Ln, 03.65.Ta, 03.67.-a}

\maketitle

-\emph{Introduction}:
The idea of interpreting thermodynamic concepts in the context of quantum mechanics has intrigued the physics community for decades. Some of the topics that have been proposed in this area of research in the last years include  quantum applications of Landauer principle\cite{Landauer,ReebWolf,Alexia}, the study of out-of-equilibrium quantum  systems \cite{John,Ross} and the possibility of creating  quantum thermodynamic engines \cite{John2}. Thermodynamics and quantum mechanics are basically involved in all branches of physics, and understanding the interconnections between these two theories is of crucial importance for the resolution of important problems, ranging from the miniaturisation of computer processors \cite{Bennet} to the study of quantum behaviours in out-of-equilibrium biological systems (see \cite{quantumbio} for a recent review on various topics in quantum biology).

On the other hand, in the context of quantum thermodynamics (QT), it is important to determine which states of a given system have genuine quantum features. In this regard, the concept of asymmetry with respect to time translation can be used to distinguish classical states from quantum states\cite{PlenioCoherence}. Loosely speaking, while it might be possible to interprete the thermodynamic properties of states that are symmetric with respect to time translation (i.e. states that  are block-diagonal in the eigenbasis of the system's Hamiltonian) using  the framework of classical statistical mechanics, the thermodynamics of states with non-zero off-diagonal elements (i.e. states with coherences) can only be understood in terms of quantum theory. Notwithstanding the importance of recent results in the field \cite{OscarWorkValue,OscarNegativeEntropy,Kavan,Major,brandao,HoroOppe,Tony}, most of these results focus on the study of block-diagonal states, and  only very recently the role played by  off-diagonal elements has become subject of deeper investigations \cite{TRudolph,TRudolph2,Marvian}.

Following this line of research, here we consider the problem of keeping a system $S$ in a arbitrary state $\rho_s$ while it is in contact with a thermal bath. We assume that the thermalisation process is modeled as a thermalising machine, i.e. a step process in which each step is characterised by a partial swapping between the state of the  system and the state of the bath\cite{ValerioTerm,ValerioHomo}. In the limit of infinitesimal steps, this framework reproduces the continuous dynamics of the system\cite{massimo,ciccarello,ciccarello2}.
We show that the amount of coherence  in the state $\rho_s$ with respect to the eigenbasis of the system's Hamiltonian $H_s$ crucially determines the properties of the operations needed to keep the state out of equilibrium. Specifically, we show that, if we restrict ourself to a certain class of external operations, it is possible to keep $\rho_s$ out of equilibrium if and only if $\rho_s$ is block-diagonal. We also show that operations keeping states with coherences out of equilibrium (thus not belonging to the class of operations mentioned above) require an extra amount of work in order to be implemented, which can be then interpreted as the work needed to keep coherences.


-\emph{Thermalisation processes}: 
We are now going to introduce the concept of \emph{thermalising machines}, which was first proposed in \cite{ValerioTerm,ValerioHomo}. Here, we will make use of thermalising machines in order to give a precise definition of a thermalisation process. This approach is particularly powerful due to the generality of the model, which is applicable to any quantum system.

Let us consider a quantum system $S$ described by a Hilbert space $\mathcal{H}_s$ and with Hamiltonian $H_s.$ The system is  in contact with a thermal bath $B$ at inverse temperature $\beta.$ Following \cite{ValerioTerm,ValerioHomo}, we describe the Hilbert space of the  thermal bath $B$ as a collection of $N$ independent  copies of the system $S$. We now assume that the initial global  state of the system and the bath is given by a product state $\rho_{sb} = \rho_s \otimes \rho_{B},$ with $\rho_s$ an arbitrary density matrix for the system and $\rho_{B} = \bigotimes_{k=1}^{N} \rho_{\beta}^{(k)}$ with $\rho_{\beta}^{(k)} = (e^{-\beta H_s})/(\text{Tr}\{e^{-\beta H_s}\} ).$  Within this model, the thermal bath consists of $N$ identical subsystems, each of which is initialised in the thermal equilibrium state at inverse temperature $\beta$ for  the systems hamiltonian $H_s$.

To characterize the bath-system interaction, we assume a collision model, where the  exchange of information between the system and the bath is described by discrete steps. In each step, the system $S$ interacts only with a single subsystem of the bath. We also assume that the system does not interact more than once with the same copy  throughout the whole  process. The one step interaction is described by the map $\Phi_{\beta}$ acting on the system. The iteration of $\Phi_{\beta}$ eventually takes the system in a state that is arbitrarily close to the thermal equilibrium state after $n$ steps\cite{ValerioTerm}. 

Specifically, we assume now that $\Phi_{\beta}$ is defined in terms of a global partial swapping  unitary  given by   
$P = \cos[\theta] \mathcal{I} + i \sin[\theta] T,$ with $\theta \in ] 0, \pi/2]$ and where $\mathcal{I} = \mathcal{I}_s \otimes \mathcal{I}_b^{(k)}$ is the identity matrix in $\mathcal{H}_s \otimes \mathcal{H}_b^{(k)}$ and $T$ is the total swapping unitary. Without risk of ambiguity, we can simplify the notation by suppressing the index $(k)$. We can now define a thermalisation process as a completely positive trace preserving (CPTP) map $\Phi_{\beta}$ given by $\Phi_{\beta}(\rho_s) = \text{Tr}_{\beta}\{P (\rho_{s} \otimes \rho_{\beta}) P^{\dag}\}.$ It can be easily seen that the local state of the system after the partial trace is given by\cite{ValerioTerm,ValerioHomo}
 \begin{equation}\label{partialswapstates}
\Phi_{\beta}(\rho_{s}) = c^2 \rho_s + s^2 \rho_{\beta} + i cs [\rho_{b},\rho_s]
\end{equation}
where $c=\cos[\theta]$ and $s=\sin[\theta].$ Note that, in our definition, the map $\Phi_{\beta}$ represents a legitimate thermalisation model for any value of $\theta \in ]0, \pi/2].$ For example, a map $\Phi_{\beta}$ corresponding to $\theta = \pi/2$ (i.e. $c=0$ and $s=1$) would thermalise the system in a single step. Clearly, a more realistic model of thermalisation requires a vey small value for $s.$ However, our results are valid for any value of $c$ and $s$ (we only exclude the cases of $\theta =0,$ i.e. $c=1$ and $s=0,$ since $\Phi_\beta$ would trivially leave any state of the system unchanged).

It is worth to note that a thermalising machine described by the map $\Phi_{\beta}$
satisfies two basic requirements of a thermalisation process: 
\begin{enumerate}
\item It is a \emph{thermal operation} \cite{HoroOppe}, i.e. a CPTP map $\Phi(\rho_s) = \text{Tr}_{b}\{U\rho_s\otimes\rho_{\beta} U^\dagger\}$ where (i) the state $\rho_{\beta}$ is a Gibbs state, i.e. $\rho_{\beta} = \frac{e^{-\beta H_{\beta}}}{\text{Tr}\{e^{-\beta H_{\beta}}\}}$, and (ii) $U$ is a global unitary such that $[U, H_s + H_{\beta}] = 0$.
\item  It obeys to zero-law of thermodynamics, in the sense that the iteration of the map will eventually bring the system to its thermal equilibrium state.
\end{enumerate}
Zero-law can be formally written as
\begin{equation}\label{zerolaw}
\forall \rho_s, \forall \epsilon, \exists n_0 :  \{ n \geq n_0\} \Rightarrow \{D_{l_1}(\Phi_{\beta}^n (\rho_s)| \rho_\beta) \leq \epsilon\}.
\end{equation}
where $\Phi_{\beta}^n$ represents $n$ subsequent iterations of the map $\Phi_{\beta}.$ It is important to point out that equation (\ref{zerolaw}) is an essential requirement in our definition of a thermalisation process. 


-\emph{Counter-acting thermalisation}:
The vague concept of acting against thermalisation can be interpreted in different ways. Here we consider a precise type of counter-thermalisation processes, the \emph{restoring processes} and the \emph{stabilising processes}. One simple way to intend the concept of contrasting thermalisation consists of invoking a restoring process. Such operation may be described by a restoring map, i.e. a CPTP-map $\Phi_r$ that allows us to restore the system in a given target state $\rho_s$ at any point during  the thermalisation process. More precisely, given a thermalisation process $\Phi_{\beta}$ whose equilibrium state is $\rho_{\beta},$ we can define the \emph{thermalisation path from $\rho_s$ to $\rho_\beta$ induced by $\Phi_{\beta}$ } as the set of all the states $\rho_s^{\prime}$ such that $\rho_s^{\prime} = \Phi_{\beta}^n(\rho_s)$ for some $n.$ Thus, a restoring map $\Phi_r$ is defined as a map that, after $m$ iteration, transform any state $\rho_s^{\prime}$ belonging to the thermalisation path into the target state $\rho_s.$

The conditions defining  restoring operations may be considered quite demanding, in the sense that the map is required to restore the state $\rho_s$ when acting on  \emph{any} state in the thermalisation path (which include states arbitrarily close to $\rho_{\beta}$). Indeed, one may consider a wider class of maps, which are only able to restore the system after it has gone through a fixed number of thermalisation steps. In particular, we consider the maps that restore the system only if it has gone through a single thermalisation step. Thus we introduce a  second class of counter-thermalising maps, the class of  stabilising maps. We call a stabilising map a CPTP-map which "continuously" keeps a system undergoing a thermalisation process in a given target state $\rho_s.$  More precisely, a \emph{stabilising map with respect to $\rho_s$ and $\Phi_{\beta}$} is a map $\Phi_s : \mathcal{B} (\mathcal{H}_s) \rightarrow  \mathcal{B} (\mathcal{H}_s)$ such that
\begin{equation}\label{stabilisingdef}
\Phi_s(\Phi_{\beta} (\rho_s)) = \rho_s.
\end{equation}

Clearly, the stabilising map will depend on the thermalisation process $\Phi_{\beta}$ and on the target state $\rho_s$. We can very well say that, for any thermalisation model $\Phi_{\beta}$ and any initial state $\rho_s,$ it exists a class of stabilising maps  $\Phi_s,$ and we will denote this class of maps with $\mathcal{S}(\Phi_{\beta},\rho_s).$ Finally, we define the class of all stabilising maps with respect to $\rho_s,$ denoted $\mathcal{S}(\rho_s),$ as the union of all sets $\mathcal{S}(\Phi_{\beta},\rho_s)$ for all possible thermalising machines, i.e. for all possible maps based on the partial swapping operation.


-\emph{Asymmetry and coherence measures}:
At this stage, a precise definition of coherence is needed. 
We define the amount of coherence of a density matrix $\rho,$ with respect to an Hamiltonian $H,$ by using  the concept of symmetric operations under time-translation. Following \cite{Marvian,TRudolph,TRudolph2}, we introduce the class of symmetric operations with respect to an Hamiltonian $H.$ This class includes  any CPTP-map that commute with the time evolution operator generated by $H.$ That is, a symmetric map $\Phi$ is such that $e^{-i H t} \Phi(\rho) e^{i H t}=  \Phi(e^{-i H t} \rho e^{i H t})$.  The set of symmetric states with respect to $H$ are then defined as the set including any state $\rho$ that is invariant under time translations, i.e. $e^{-i H t} \rho e^{i H t} = \rho.$ The symmetric states are thus the states which are diagonal in the eigenbasis of $H$ (for simplicity, here we assume that the Hamiltonian is non-degenerate. All the results can be generalised to the case of a degenerate spectrum with no conceptual difficulties). Naturally, we will call the states which have non-zero off-diagonal elements in the eigenbasis of $H$ asymmetric states. Asymmetric states are then states with coherences, and the words asymmetry and coherence will be used with the same meaning hereafter. 

The amount of coherence in a state can be quantified by any real function $C(\rho)$ which do not increase under symmetric operations, i.e. any function such that  $\forall \Phi$ symmetric, $C(\Phi(\rho))\leq C(\rho).$  In particular, here we will use a specific coherence measure which is based on the $l_1$-distance between Hermitian matrices \cite{PlenioCoherence}. The $l_1$-distance is defined as
\begin{equation}\label{l1def}
D_{l_1}(\rho|\rho^{\prime}) = \sum_{i,j} |\rho_{i,j}-\rho_{i,j}^\prime|.
\end{equation}
The amount of coherence $C_{H}(\rho)$ in a state $\rho$ with respect to an Hamiltonian $H$ is then given by the $l_1$-distance between $\rho$ and its projection on the subspace of symmetric states with respect to $H,$ i.e.
\begin{equation}\label{coherenceDef}
C_{H}(\rho) = D_{l_1}(\rho|P_{\delta}(\rho)) 
\end{equation}
where $P_{\delta}(\rho) = \sum_i \rho_{i,i} |\varphi_i \rangle \langle \varphi_i |,$ with $\{|\varphi_i \rangle\}$ the eigenstates of $H.$ $P_{\delta}(\rho)$ represents the diagonal part of $\rho$ in the basis $\{|\varphi_i \rangle\}$ (the block-diagonal part in case of degeneracy). Thus, the amount of coherence in $\rho$ with respect to $H$ is given by $C_{H}(\rho) = \sum_{i\neq j} |\rho_{i,j}|.$  Notice that, according to this definition, a state is diagonal (block-diagonal for degenerate Hamiltonians) in the eigenbasis of $H$ iff it has zero coherence. 


-\emph{Generalised thermal operations}:
We now introduce the class of  operations that we are allowed to perform. We restrict the class of allowed operations to the class $\mathcal{G}$ of \emph{Generalised Thermal Operations} (GTO) defined as $\Phi(\rho_s) = \text{Tr}_{r}\{U \rho_s \otimes \rho_r U^{\dag}\}$ with
$[U,H_s + H_r] = 0$ and
$[\rho_r,H_r] = 0,$
where $U$ is a unitary operation and $\rho_r$ is the state of an ancillary system $R$ whose Hamiltonian is $H_r.$ This definition represents a generalisation of the definition for thermal operations. Indeed, while we retain the energy conservation condition on the unitary $U,$ we  now allow the use of ancilla states which are not Gibbs states, as long as they are stationary states with respect to the local Hamiltonian $H_r.$ Notice that we do not put any other restriction on the system $R,$ which represents the energy-resource system. This means that we are also allowed to use resource systems with arbitrarily big Hamiltonians.

The motivations behind the choice of considering only GTO will become clear  shortly. 
At this stage we only notice  that the two condition $[U,H_s + H_r] = 0$ and $[\rho_r,H_r] = 0$ are, if taken individually, neutral, meaning that none of the two conditions generates any restriction  on the class of allowed maps if considered alone. This can be easily proven by taking $U$ to be the swapping unitary $T$ and then imposing only one condition at the time. It is easy to see that in both case a map that transforms an arbitrary state $\rho$ into another arbitrary state $\rho^\prime$ can be easily constructed. This is not true for thermal operations. It is also worth to note that GTO are, as well as thermal operations, symmetric operations with respect to $H_s$ \cite{TRudolph}. This means that, any proper coherence measure such as $C_{H_s}(\rho)$ must not increase under GTO. This property plays a crucial role in the following derivation.


-\emph{State's stabilisation and restoring}:
We are now ready to state the main results of this paper. Our first goal is to establish whether it is possible or not to find an operation that contrasts thermalisation, meaning either a restoring or a stabilising operation, which is also a GTO. We find that this is possible if and  only if the target state $\rho_s$ (i.e. the state we want to stabilise or restore) has no coherences. Having introduced the class $\mathcal{S}(\rho_s)$ of stabilising maps with respect to $\rho_s$ and the class  $\mathcal{G}$ of generalised thermal operation, we can state the following proposition: 
\begin{namedtheorem}
A stabilising operation with respect to $\rho_s$ which is also a generalised thermal operation exists if and only if $C_{H_s}(\rho_s)=0$. Formally
\begin{equation}\label{thm1}
C_{H_s}(\rho_s)= 0 \iff \mathcal{S}(\rho_s) \cap \mathcal{G} \neq \emptyset
\end{equation}
\end{namedtheorem}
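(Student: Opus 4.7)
The statement is an \textit{iff}, so I would split the argument into the two directions and handle the easier ($\Leftarrow$) one by explicit construction before tackling the necessity direction by a coherence-monotonicity argument.

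For the $\Leftarrow$ direction, the assumption $C_{H_s}(\rho_s)=0$ means $[\rho_s,H_s]=0$, which suggests a very natural candidate stabiliser: let the resource $R$ be a fresh copy of $S$ with $H_r=H_s$, prepare it in $\rho_r=\rho_s$, and take $U$ to be the total swap between $S$ and $R$. Then $[U,H_s+H_r]=0$ because the two Hamiltonians are identical, and $[\rho_r,H_r]=0$ is exactly the vanishing-coherence hypothesis, so the map is a bona fide GTO. Since swapping followed by tracing out the resource returns $\rho_s$ on $S$ for \emph{any} input, in particular $\Phi_s(\Phi_\beta(\rho_s))=\rho_s$, which places this $\Phi_s$ inside $\mathcal{S}(\Phi_\beta,\rho_s)\cap\mathcal{G}$ for every admissible $\Phi_\beta$.

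For the $\Rightarrow$ direction, the strategy is a two-step monotonicity squeeze. The paper has already recorded that (i) $\Phi_\beta$, being a thermal operation, is symmetric with respect to $H_s$, and (ii) any $\Phi_s\in\mathcal{G}$ is likewise symmetric; hence $C_{H_s}$ is non-increasing under both. If $\Phi_s(\Phi_\beta(\rho_s))=\rho_s$, chaining the two contractions gives
\begin{equation*}
C_{H_s}(\rho_s)=C_{H_s}\bigl(\Phi_s(\Phi_\beta(\rho_s))\bigr)\le C_{H_s}\bigl(\Phi_\beta(\rho_s)\bigr)\le C_{H_s}(\rho_s),
\end{equation*}
forcing $C_{H_s}(\Phi_\beta(\rho_s))=C_{H_s}(\rho_s)$. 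It then suffices to show that $\Phi_\beta$ is \emph{strictly} contractive on any coherent state, so that the only way the equality can hold is $C_{H_s}(\rho_s)=0$.

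The strict contractivity is the only piece requiring a calculation, and it is the step I expect to be the main obstacle, though a short one. Using (\ref{partialswapstates}) and diagonality of $\rho_\beta$ in the $H_s$-eigenbasis, the $(i,j)$ off-diagonal entry of $\Phi_\beta(\rho_s)$ collapses to $\bigl[c^2+i\,cs\bigl((\rho_\beta)_{ii}-(\rho_\beta)_{jj}\bigr)\bigr](\rho_s)_{ij}$; its modulus is bounded by $|c|\,|(\rho_s)_{ij}|$ because $\bigl((\rho_\beta)_{ii}-(\rho_\beta)_{jj}\bigr)^2\le 1$ and $c^2+s^2=1$. Summing on $i\neq j$ gives $C_{H_s}(\Phi_\beta(\rho_s))\le |c|\,C_{H_s}(\rho_s)$, and since the definition of a thermalisation process excludes $\theta=0$ we have $|c|<1$, producing a strict contradiction whenever $C_{H_s}(\rho_s)>0$. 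The subtle point to flag is that the union $\mathcal{S}(\rho_s)=\bigcup_\theta\mathcal{S}(\Phi_\beta,\rho_s)$ is taken over all $\theta\in\,]0,\pi/2]$, and the argument only covers this range uniformly because $|c|<1$ throughout; relaxing the exclusion of $\theta=0$ would make the identity trivially stabilise every state and invalidate the theorem.
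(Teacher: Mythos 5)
Your proposal is correct and follows essentially the same route as the paper's own proof: the swap-based construction with $H_r=H_s$, $\rho_r=\rho_s$ for sufficiency, and for necessity the strict contraction bound $C_{H_s}(\Phi_\beta(\rho_s))\le c\,C_{H_s}(\rho_s)$ (from $(\rho_\beta^{(i,i)}-\rho_\beta^{(j,j)})^2\le 1$) combined with the coherence monotonicity of GTO. The only cosmetic difference is that you phrase the contradiction as a squeeze of inequalities forcing equality, whereas the paper argues directly that the stabilising map would have to increase coherence; these are logically equivalent.
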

\begin{proof}
Here, we give a sketch of the proof. The full details of the proof are founded in appendix \ref{appA}. 

In order to prove the left-to-right implication (i.e. zero coherence is a sufficient condition for the existence of the map) we explicitly construct a stabilising map which is also a GTO. Such map is constructed by taking $H_r = H_s,$ $\rho_r=\rho_s$ and the swapping operation as the global unitary $U.$

In order to prove the right-to-left implication (i.e. that zero coherence is a  necessary condition for the existence of the map), first we prove that, for any $\rho_s$ with non-zero coherence, $\Phi_{\beta}$ strictly decreases the coherence of $\rho_s.$ Thus, a stabilising map $\Phi_s$ must necessarily increase the coherence of $\Phi_{\beta}(\rho_s).$ On the other hand, it is proven that GTO cannot increase the amount of coherence in any state\cite{TRudolph}. Hence, it follows that a stabilising map that is also a GTO cannot exist when $C_{H_s}(\rho_s)\neq 0.$
\end{proof}

Clearly, this result is extended to restoring maps too, since these maps are also stabilising maps. Indeed, in the case of restoring maps the result is even more general: it can be easily proven that a restoring operation $\Phi_r$ which is also a GTO cannot exist not only for thermalising machine, but for any model that obey  zero-law of thermodynamics as stated in equation (\ref{zerolaw}) if the target state $\rho_s$ has non-zero coherences. This can be easily understood by noticing that such operations must be able to increase coherences in order to transform states which are arbitrarily close to $\rho_{\beta}$ into $\rho_s.$ Since GTO cannot increase the amount of coherences, it follows that a restoring map $\Phi_r$ cannot be a GTO.  

We now pass to analyse the consequences of proposition (\ref{thm1}). It is clear that, if we are willing to maintain a state with coherences out of thermal equilibrium, we need to abandon one of the two conditions defining GTO. Since the resource system $R$ has the role of an energy reservoir, it is natural to keep the condition $[\rho_r,H_r] = 0,$ which ensures a precise knowledge of the amount of energy available in the reservoir at the initial time. This means that we need to consider unitary operations that do not commute with $H_s + H_r.$ We thus assume that in addition to the energy available in the resource system, the operator is able to change the total energy by tuning the coupling between the resource system $R$ and the system $S$. This energy can be interpreted as work performed by the operator on the system $\{S + R\}$ \cite{Pusz,Alicki,Esposito,Suomela}.

Let us then consider maps $\Phi_s(\rho_s^\prime) = \text{Tr}_{r}\{U \rho_s \otimes \rho_r U^{\dag}\}$ for which $[\rho_r,H_r] = 0$ but any unitary $U$ is allowed. To simplify the notation, here we have set $\rho_s^{\prime} = \Phi_{\beta}(\rho_s).$ The work $W$ done by $U$ on the S+R composed system can be  written as 
\begin{equation}\label{Wdef}
W =\text{Tr} \{(H_s + H_r) U \rho_s^\prime \otimes \rho_r U^{\dag}\} - \text{Tr} \{(H_s + H_r) \rho_s^\prime \otimes \rho_r \}
\end{equation}
The connection between $W$ and the commutativity between $H_s + H_r$ and  $U$ is then clear:  proposition (\ref{thm1}) ensures that,  when the state $\rho_s$ has zero coherence, it is always possible to find a $U$ that commute with $H_s +H_r,$ in which case we have $W=0,$ as can be easily seen from equation (\ref{Wdef}). On the other hand, proposition (\ref{thm1}) also ensures that we must choose an unitary such that $[H_s + H_r, U] \neq 0$ in order to stabilise a state with coherences (assuming that we keep the condition $[\rho_r,H_r] = 0$), in which case we have in general that $W\neq 0.$  The quantity $W$ can be thus interpreted as the work needed in order to preserve coherences from thermalisation. 

To further develop this concept, we now present an example of a map $\Phi_s$ which stabilise  states with coherences and we explicitly calculate the value of $W.$ The stabilising map $\Phi_s$ is constructed  by  considering a resource system composed by many copies of the system itself, all prepared in the target state $\rho_s$ and a global unitary given by the swapping operation $T.$ The map is then given by  $\Phi_{s}(\rho_s^\prime) = \text{Tr}_{r}\{T \rho_s^\prime \otimes \rho_s T^{\dag}\}$ with $T$ the total swapping operator, whose action is given by  $T \rho_s^\prime \otimes \rho_s T^{\dag} = \rho_s \otimes  \rho_s^\prime.$ In order to ensure that $[\rho_s,H_r] =0,$ we pick an Hamiltonian $H_r$ for the resource system given by  
\begin{equation}\label{Hr}
H_r = -(1/\beta) (\log[Z_{H_r}] + \log[\rho_s]).
\end{equation}
with $Z_{H_r} = \text{Tr}_r\{e^{-\beta H_r}\}$ the partition function defined by $H_r$ and the temperature $\beta.$ Clearly, $H_r$ is constructed such that $\rho_s = e^{-\beta H_r}/Z_{H_r}$  is the corresponding Gibbs state at temperature $\beta.$   We can now calculate how much extra work $W$ is required to perform the stabilisation.  Substituting equations (\ref{partialswapstates}) and (\ref{Hr}) in equation (\ref{Wdef}) and writing $H_s$ as $H_s = -(1/\beta) (\log[Z_{H_s}] + \log[\rho_{\beta}])$, it can be shown with a few straightforward passages (see Appendix \ref{appB}) that the work $W$ can be written as 
\begin{equation}\label{CoherenceWork}
W = \frac{s^2}{\beta}D_{\bf symm}(\rho_s | \rho_{\beta})
\end{equation}
where $D_{\bf symm}(\cdot | \cdot)$ is the symmetrized relative entropy defined as $D_{\bf symm}(\rho|\rho^{\prime}) = D(\rho|\rho^\prime) + D(\rho^\prime|\rho),$ with $D(\rho|\rho^\prime) = \text{Tr}\{\rho \log \rho - \rho \log \rho^{\prime}\}$ the usual relative entropy.  Notice that the quantity in Equation (\ref{CoherenceWork}) is always positive, which  means that external work is always required by the system in order to maintain coherences, as expected.



-\emph{Conclusions}:
We have shown that, given a thermalisation model based on partial swapping operations and described by the map $\Phi_{\beta}$, a stabilising map $\Phi_s$ which counteract $\Phi_{\beta}$ and which is, at the same time, a generalised thermal operation exists if and only if the target state $\rho_s$ is block diagonal in the system's Hamiltonian eigenbasis. This implies that, in order to keep a state with coherences out of thermal equilibrium, one of the two conditions defining the class of GTO must be violated. In this regard, we have analysed maps such that $[U, H_s + H_r] \neq 0,$ finding that an extra amount of work $W$ proportional to the symmetrized relative entropy between $\rho_s$ and $\rho_{\beta}$ is required to perform such operations. The work $W$ can be then naturally interpreted as the amount of work that is necessary to perform on the system in order to maintain the  coherences  in $\rho_s$. This result, albeit derived in the context of collision models, can be extended to the case of  a continuous process in which a system is allowed to interact with a bath and a resource system at the same time. Indeed, a continuos process can be simulated by a collision model in the limits of infinitesimal steps\cite{massimo,ciccarello,ciccarello2,Ticozzi}. 


-\emph{Acknowledgements}:
The authors would like to thank Cyril Branciard, Francesco Ciccarello, Maxime Clusel, John Goold, Eduardo Mascarenhas, Massimo Palma and Marcelo F. Santos for helpful discussions. We acknowledge financial support from INCAL project.



\appendix
\begin{widetext}
\section{Full proof of Proposition (\ref{thm1})}
\label{appA}

\begin{namedtheorem}\label{thm2}
A stabilising operation with respect to $\rho_s$ which is also a generalised thermal operation exists iff $C_{H_s}(\rho_s)=0$. Formally
\begin{equation}
\{C_{H_s}(\rho_s)= 0\} \iff \{\mathcal{S}(\rho_s) \cap \mathcal{G} \neq \emptyset\}
\end{equation}
\end{namedtheorem}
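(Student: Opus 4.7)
The plan is to prove the biconditional by treating the two directions separately. For $\Leftarrow$ I will give an explicit construction, while for $\Rightarrow$ I will use a monotonicity argument based on the fact, recalled above, that every GTO is a time-translation symmetric operation and hence cannot increase the coherence measure $C_{H_s}$. Assume first that $C_{H_s}(\rho_s)=0$, so that $\rho_s$ commutes with $H_s$. I would take the resource $R$ to be an identical copy of $S$, with $H_r=H_s$ and $\rho_r=\rho_s$; the stationarity condition $[\rho_r,H_r]=0$ is then immediate, and because $H_s+H_r$ is symmetric under interchange of the two factors, the swap $T$ satisfies $[T,H_s+H_r]=0$. The induced channel is a GTO, and since $T(\rho\otimes\rho_s)T^\dagger=\rho_s\otimes\rho$ it sends every input to $\rho_s$; in particular it stabilises $\Phi_\beta(\rho_s)$, so it lies in $\mathcal{S}(\rho_s)\cap\mathcal{G}$.

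For the converse, assume $C_{H_s}(\rho_s)>0$ and suppose for contradiction that some $\Phi_s\in\mathcal{S}(\rho_s)\cap\mathcal{G}$ exists. The key step is to show that $\Phi_\beta$ strictly reduces $C_{H_s}$ on any state with non-zero coherence. Working in the $H_s$ eigenbasis and writing $\rho_\beta=\sum_i p_i|\varphi_i\rangle\langle\varphi_i|$, so that $[\rho_\beta,\rho_s]_{ij}=(p_i-p_j)(\rho_s)_{ij}$, equation (\ref{partialswapstates}) yields, for $i\neq j$,
\[(\Phi_\beta(\rho_s))_{ij}=(\rho_s)_{ij}\bigl[c^2+ics(p_i-p_j)\bigr],\]
of modulus $c\,|(\rho_s)_{ij}|\sqrt{c^2+s^2(p_i-p_j)^2}\leq c\,|(\rho_s)_{ij}|$, using $|p_i-p_j|\leq 1$ and $c^2+s^2=1$. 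Summing over $i\neq j$ gives $C_{H_s}(\Phi_\beta(\rho_s))\leq c\,C_{H_s}(\rho_s)$, and since $\theta>0$ forces $c<1$, the bound is strict whenever $C_{H_s}(\rho_s)>0$. Combined with the symmetry-monotonicity of GTOs this produces
\[C_{H_s}(\rho_s)=C_{H_s}(\Phi_s(\Phi_\beta(\rho_s)))\leq C_{H_s}(\Phi_\beta(\rho_s))<C_{H_s}(\rho_s),\]
a contradiction.

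The only non-trivial step I anticipate is the strict-contraction estimate for $\Phi_\beta$. The argument is clean once one observes that even when $p_i=p_j$ (so that the imaginary contribution vanishes) the prefactor $c^2$ is still strictly less than $1$, so strictness survives. The non-degeneracy assumption on $H_s$ recalled in the paper removes any ambiguity about which matrix elements of $\rho_s$ contribute to $C_{H_s}$; without it one would additionally need to check that $\Phi_\beta$ does not generate coherence across degenerate blocks, which a nearly identical calculation confirms. Everything else reduces to invoking the already-established symmetry property of GTOs cited from \cite{TRudolph}.
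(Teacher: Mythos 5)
Your proof is correct and follows essentially the same route as the paper: the identical swap-based construction (resource copy with $H_r=H_s$, $\rho_r=\rho_s$, total swap $T$) for the sufficiency direction, and for necessity the identical strict-contraction estimate $C_{H_s}(\Phi_\beta(\rho_s))\leq c\,C_{H_s}(\rho_s)<C_{H_s}(\rho_s)$ combined with the cited coherence-monotonicity of GTOs to reach a contradiction. One cosmetic slip only: in your opening sentence the labels $\Rightarrow$ and $\Leftarrow$ are swapped relative to the implications you then actually prove (the explicit construction establishes $C_{H_s}(\rho_s)=0\Rightarrow\mathcal{S}(\rho_s)\cap\mathcal{G}\neq\emptyset$, the monotonicity argument the converse), but the arguments themselves are attached to the correct hypotheses.
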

\begin{proof}
We first prove the left-to-right implication (Sufficient condition) and then the right-to-left implication (necessary condition).

-\emph{Sufficient condition}:
First we prove that $C_{H_s}(\rho_s) = 0$ is a sufficient condition for the existence of the map, i.e. 
$$\{C_{H_s}(\rho_s) = 0\} \Longrightarrow \{\exists \Phi_s \in  \mathcal{G}\cap \mathcal{S}(\rho_s) \}.$$ 
To do that we explicitly construct a map $\Phi_s(\rho_s^\prime)$ which take any state $\rho_s^\prime$ ( including $\Phi_\beta(\rho_s)$) into $\rho_s$ and that is a  generalised thermal operation. We use an ancillary resource system $R$ whose initial state is $\rho_r$ and whose Hamiltonian is $H_r.$ The map $\Phi_s(\rho_s^\prime)$ is then defined as  
$$\Phi_s(\rho_s^{\prime}) = \text{Tr}_r \{T \rho_s^{\prime} \otimes \rho_r T^\dag\},$$ 
where the unitary $T$ is the total swapping operation whose action is given by  
$$T \rho_s^\prime \otimes \rho_r T^{\dag} = \rho_r \otimes \rho_s^\prime.$$
 By choosing $\rho_r = \rho_s$ and $H_r=H_s,$ $\Phi_s$ is a then a  stabilising operation with respect to $\rho_s$ and it can be easily checked that it is also a generalised thermal operation (the swapping operation is energy preserving since $H_r = H_s$ and $[\rho_s,H_s] = 0$ since the state $\rho_s$ is bock diagonal in the eigenbasis of $H_s$).

-\emph{Necessary condition}: we now prove that $C_{H_s}(\rho_s) = 0$ is also a necessary condition for the existence of a stabilising map, i.e.
 $$\{C_{H_s}(\rho_s) = 0\} \Longleftarrow \{\exists \Phi_s \in  \mathcal{G}\cap \mathcal{S}(\rho_s) \}$$

 \emph{Step 1}: As a first step we prove that  for any map $\Phi_{\beta}$ written in the form $\Phi_{\beta}(\rho_s) = \text{Tr}_{\beta}\{P (\rho_{s} \otimes \rho_{\beta}) P^{\dag}\},$ with $P = \cos[\theta] \mathcal{I} + i \sin[\theta] T$ and $\theta \in ] 0, \pi/2],$ and  for any state $\rho_s$ with non-zero coherence, we have that  
 \begin{equation}\label{firststep}
 C_{H_s}(\Phi_{\beta}(\rho_s)) < C_{H_s}(\rho_s).
 \end{equation}
  To do so, we explicitly  write the state $\Phi_{\beta}(\rho_s),$ which according to equation (\ref{partialswapstates}) is given by 
$$\Phi_{\beta}(\rho_{s}) = c^2 \rho_s + s^2 \rho_{\beta} + i cs [\rho_{\beta},\rho_s].$$ We now pick the basis $\{|i\rangle\}$ of eigenvectors of $H_s$  in which $\rho_{\beta}$ is diagonal (although here we consider non-degenerate Hamiltonians for the sake of simplicity, choosing such basis is necessary to extend the prove to the case in which  $H_s$ has degenerate eigenvalues). Following the definition of coherence given in Equation (\ref{coherenceDef}) (i.e. the sum of the modulus of the off-diagonal elements in $\rho$) we  can easily derive that
\begin{equation}
C_{H_s}(\Phi_{\beta}(\rho_s)) = \sum_{i\neq j} \big|c^2 \rho_s^{(i,j)} +i cs (\rho_{\beta}^{(i,i)} - \rho_{\beta}^{(j,j)}) \rho_s^{(i,j)} \big|
\end{equation}
where $\rho^{(i,j)} = \langle i | \rho | j\rangle$ are the matrix elements of the two density matrices in the basis $\{|i\rangle\}.$
Notice that, since $\rho_{\beta}^{(i,i)}$ represents probabilities, we have that $(\rho_{\beta}^{(i,i)} - \rho_{\beta}^{(j,j)})^2 \leq 1.$ Taking into account that we consider maps with $\theta \in ] 0, \pi/2],$ which means $c\in [0,1[$ and $s \in ]0,1],$ we can write\begin{equation}
\begin{split}
C_{H_s}(\Phi_{\beta}(\rho_s)) = &\sum_{i\neq j} \big|c^2 \rho_s^{(i,j)} +i cs (\rho_{\beta}^{(i,i)} - \rho_{\beta}^{(j,j)}) \rho_s^{(i,j)} \big| \\ 
= &\sum_{i\neq j} \big|c \rho_s^{(i,j)}| | c+ i s (\rho_{\beta}^{(i,i)} - \rho_{\beta}^{(j,j)}) \big|\\
=& \sum_{i\neq j} c |\rho_s^{(i,j)}| \sqrt{c^2 + s^2 \big(\rho_{\beta}^{(i,i)} - \rho_{\beta}^{(j,j)}\big)^2} \leq  \sum_{i\neq j} c |\rho_s^{(i,j)}| \sqrt{c^2 + s^2} \\ 
= & c \sum_{i\neq j} |\rho_s^{(i,j)}| = c \cdot C_{H_s}(\rho_s)
\end{split}
\end{equation}
Since $c  < 1$ for any non trivial process, we have that
\begin{equation}
C_{H_s}(\Phi_{\beta}(\rho_s)) \leq  c \cdot C_{H_s}(\rho_s) < C_{H_s}(\rho_s)
\end{equation}
This prove equation (\ref{firststep}).  

\emph{Step 2}: As a second step, we notice that generalised thermal operation cannot increase the amount of Coherence in any state $\rho_s$, i.e. for any generalised thermal operation $\Phi_s$ we have
\begin{equation}\label{GTOCoher}
C_{H_s}(\Phi_s(\rho_s)) \leq C_{H_s}(\rho_s)
\end{equation}
 This property is proven in \cite{TRudolph2}.

\emph{Step 3}: The last step of the proof is straightforward. Indeed, a stabilising map is such that $\Phi_s(\Phi_{\beta}(\rho_s)) = \rho_s.$ Because of (\ref{firststep}), $\Phi_{\beta}$ must strictly decrease the coherence for any state.  Therefore, the stabilising map $\Phi_s$ must necessarily increase the amount of coherence in $\Phi_{\beta}(\rho_s).$ Because of (\ref{GTOCoher}), this means that it cannot be a generalised thermal operation.
 
\end{proof}


 \section{Derivation of  Equation (\ref{CoherenceWork})}
 \label{appB}
We start by considering the expression in equation (\ref{Wdef}). Keeping in mind that we have
 \begin{equation}
 \begin{split}
 &\rho_r = \rho_s, \\ 
 &\rho_s^{\prime} = \Phi_{\beta}(\rho_s) =c^2 \rho_s + s^2 \rho_{\beta} + ics [\rho_{\beta},\rho_s], \\ 
  & U = T \quad \text {with T representing the swapping unitary}\\
 &H_s = -(1/\beta) (\log[Z_{H_s}] + \log[\rho_{\beta}]) \quad \text{with} \quad Z_{H_s} = \text{Tr}_s\{e^{-\beta H_s}\}, \\
&H_r = -(1/\beta) (\log[Z_{H_r}] + \log[\rho_{s}]) \quad \text{with} \quad Z_{H_r} = \text{Tr}_s\{e^{-\beta H_r}\},
 \end{split}
 \end{equation}
  equation (\ref{Wdef})  can be transformed according to
 \begin{equation}
 \begin{split}\label{Equstep1}
 W = &\text{Tr} \{(H_s + H_r) U \rho_s^\prime \otimes \rho_r U^{\dag}\} - \text{Tr} \{(H_s + H_r) \rho_s^\prime \otimes \rho_r \}=\\
 &\text{Tr} \{(H_s + H_r) T \rho_s^\prime \otimes \rho_s T^{\dag}\} - \text{Tr} \{(H_s + H_r) \rho_s^\prime \otimes \rho_s \} \\
 = &\text{Tr} \{(H_s + H_r) \rho_s \otimes  \rho_s^\prime\}   - \text{Tr} \{(H_s + H_r) \rho_s^\prime \otimes \rho_s \}  \\ 
 =& \text{Tr}_s \{ H_s (\rho_s - \rho_s^\prime)\} + \text{Tr}_r \{ H_r ( \rho_s^\prime - \rho_s )\}\\
 = &\text{Tr}_s \{ H_s (\rho_s - c^2 \rho_s - s^2 \rho_{\beta} - ics [\rho_{\beta},\rho_s])\} + \text{Tr}_r \{ H_r (c^2 \rho_s + s^2 \rho_{\beta} + ics [\rho_{\beta},\rho_s] - \rho_s)\}\\
 = &s^2 \Big[\text{Tr}_s \{ H_s (\rho_s -\rho_{\beta}) \} - \text{Tr}_r \{ H_r (\rho_s -\rho_{\beta} )\}\Big]- i cs \Big[\text{Tr}_s \{ [H_s,\rho_{\beta}] \rho_s \} + \text{Tr}_r \{ \rho_{\beta}[\rho_{s},H_r]\} \Big].
 \end{split}
 \end{equation}
Noticing that $[H_s,\rho_{\beta}] = [H_r,\rho_{s}] = 0,$ the second term in the last line of equation (\ref{Equstep1}) vanishes. After substituting the expressions for $H_s$ and $H_r$ in the first term, we have
\begin{equation}
 W = \frac{s^2}{\beta} \Big[- \log Z_{H_s} \text{Tr}_s \{(\rho_s -\rho_{\beta}) \} + \log Z_{H_r} \text{Tr}_r \{(\rho_s -\rho_{\beta}) \} - \text{Tr}_s \{(\rho_s -\rho_{\beta}) \log \rho_{\beta}\}   + \text{Tr}_r \{(\rho_s -\rho_{\beta}) \log \rho_{s}\} \Big].
 \end{equation}
Since $\rho_s$ and $\rho_{\beta}$ have are both trace-1 matrices, the first two terms in the expression above vanish. Re-arranging the two remaining terms and considering that the system $r$ is a copy of the system $s,$  so that  tracing over $s$ and tracing over $r$ are equivalent, we obtain 
 \begin{equation}
 \begin{split}
 W = &\frac{s^2}{\beta} \Big[- \text{Tr}_s \{(\rho_s -\rho_{\beta}) \log \rho_{\beta}\}   + \text{Tr}_r \{(\rho_s -\rho_{\beta}) \log \rho_{s}\} \Big] = \\ 
 =& \frac{s^2}{\beta} \Big[ \text{Tr} \{ \rho_s \log \rho_s - \rho_s \log \rho_{\beta}\} + \text{Tr} \{ \rho_{\beta} \log \rho_{\beta} - \rho_{\beta} \log \rho_{s}\} \Big] = \\
 = &\frac{s^2}{\beta} [D(\rho_s| \rho_{\beta}) + D(\rho_{\beta}| \rho_s)].
  \end{split}
 \end{equation}
This  is the expression given in equation (\ref{CoherenceWork}) for $W.$
 
 \end{widetext}

\end{document}